\newcommand{\@chapapp}{\relax}%
\newtheorem{thm}{Theorem}
\newtheorem{obs}[thm]{Observation}
\theoremstyle{definition}
\newtheorem{defn}[thm]{Definition}
\newtheorem{prop}[thm]{Proposition}
\DeclareMathOperator{\sign}{Sign}
\DeclareMathOperator{\diam}{diam}
\newcommand\ceil[1]{\lceil#1\rceil}
\newcommand\floor[1]{\lfloor#1\rfloor}
\newcommand\card[1]{\lvert#1\rvert}
\newcommand{\etal}{\textit{et al}. }
\newcommand{\ie}{\text{i.e}., }
\newcommand{\orient}[1]{\overrightarrow{#1}}
\def\SN{\ensuremath{S_n}}
\def\USN{\ensuremath{\orient{S_n}}}
\begin{document}
\title{Oriented Diameter of Star Graphs}
\author{K. S. Ajish Kumar\inst{1}\and
Deepak Rajendraprasad\inst{2} \and
K. S. Sudeep\inst{3}}
\authorrunning{K. S. Ajish Kumar, Deepak Rajendraprasad, K. S. Sudeep}

\institute{Department of Electronics and Communication Engineering, National Institute of Technology Calicut, India \and
Department of Computer Science and Engineering, Indian Institute of Technology Palakkad, India
\and
Department of Computer Science and Engineering, National Institute of Technology Calicut, India}
\maketitle

\begin{abstract}

An {\em orientation} of an undirected graph $G$ is an assignment of exactly
one direction to each edge of $G$. Converting two-way traffic networks to
one-way traffic networks and bidirectional communication networks to 
unidirectional communication networks are practical instances of graph
orientations. In these contexts minimising the diameter of the resulting
oriented graph is of prime interest. 

The $n$-star network topology was proposed as an alternative to the hypercube
network topology for multiprocessor systems by Akers and Krishnamurthy [IEEE
Trans. on Computers (1989)].  The {\em $n$-star graph} $\SN$ consists of $n!$
vertices, each labelled with a distinct permutation of $[n]$. Two vertices are
adjacent if their labels differ exactly in the first and one other position.
$\SN$ is an $(n-1)$-regular, vertex-transitive graph with diameter
$\floor{3(n-1)/2}$.  Orientations of $\SN$, called {\em unidirectional star
graphs} and distributed routing protocols over them were studied by Day and
Tripathi [Information Processing Letters (1993)] and Fujita [The First
International Symposium on Computing and Networking (CANDAR 2013)]. Fujita
showed that the (directed) diameter of this unidirectional star graph $\USN$ is
at most $\ceil{5n/2} + 2$.

In this paper, we propose a new distributed routing algorithm for the same
$\USN$ analysed by Fujita, which routes a packet from any node $s$ to any node
$t$ at an undirected distance $d$ from $s$ using at most $\min\{4d+4, 2n+4\}$
hops. This shows that the (directed) diameter of $\USN$ is at most $2n+4$.  We
also show that the diameter of $\USN$ is at least $2n$ when $n \geq 7$, thereby showing that our upper bound is tight up to an additive
factor.

\keywords{Strong Orientation \and Oriented Diameter \and Star Graphs.}
\end{abstract}

\section{Introduction}

Let $G=(V, E)$ be an undirected graph with vertex set $V$ and edge set $E$.
An orientation $\orient{G}$ of $G$ is a directed graph obtained by assigning
exactly one direction to each edge of $G$. An orientation is called a
\emph{strong orientation} if the resulting directed graph is strongly
connected. A directed graph $\orient{G}$ is said to be strongly connected, if there exists at least one directed path from every vertex of $\orient{G}$ to every other vertex. There can be many strong orientations for $G$. 
The smallest diameter among all possible strong orientations of $G$ is called the \emph{oriented diameter} of $G$, denoted by $\orient{diam}(G)$. 
That is, $\orient{diam}(G) = \min \{
\diam(\orient{G}) \vert ~ \orient{G} \text{ is a strong orientation of G}\}$. 

The research on strong orientations dates back to 1939 with Robbins \cite{robbins1939theorem}, solving the {\em One Way Street} problem. Given the road network of city, the One Way Street problem poses the following question:  
Is it possible to implement one way traffic in every street without compromising the accessibility of any of the junctions of the network? 
Robbins proved that the necessary and sufficient condition for the existence of
a strong orientation of a graph $G$ is the $2$-edge connectivity of $G$. A
$2$-edge connected graph is one that cannot be disconnected by
removal of a single edge. The research on orientations that minimise the
resulting distances was initiated by Chv\'{a}tal and Thomassen in 1978
\cite{chvatal1978distances}. They proved that, for every undirected graph $G$
there exists an orientation $\orient{G}$ such that for every edge $(u,v)$ which
belongs to a cycle of length $k$, either $(u,v)$ or $(v,u)$ belong to a cycle
of length $h(k)$ in $\orient{G}$, where $h(k) = (k-2)2^{\floor{\frac{(k-1)}{2}}}+2$.
They also showed that every $2$-edge connected undirected graph of diameter $d$
will possess an orientation with diameter at most $2d^2+2d$.  Further, they
proved that it is NP-hard to decide whether an undirected graph possesses an
orientation with diameter at most $2$.

Fomin et al. \cite{fomin2004complexity} continues the algorithmic study on oriented diameter on chordal graphs.  They show that every chordal graph $G$ has an oriented diameter at most $2 \diam(G)+1$. This result proves that the oriented diameter problem is $(2,1)$-approximable for chordal graphs.
A polynomial time algorithm for finding the oriented diameter of planar graphs was given by Eggemann \cite{eggemann2009minimizing}. 
Fomin \etal \cite{fomin2004free} have proved that the oriented diameter of every AT-free bridgeless connected graph $G$ is at most $2 \diam(G)+11$ and for every interval graph $G$, it is at most $\frac{5}{4}\diam(G)+\frac{29}{2}$.
Dankelmann et al., \cite{dankelmann2018oriented} proved that every $n$-vertex bridgeless graph with maximum degree $\Delta$ has oriented diameter at most $n-\Delta+3$. For balanced bipartite graphs (a bipartite graph with equal number of vertices on both halves of the bipartition), they prove a better bound of $n-2\Delta+7$.
The problems of finding strong orientations that minimize the parameters such as diameter, distance between pairs of vertices \emph{etc.}, have been investigated for other restricted subclasses of 
graphs like $n$-dimensional hypercube \cite{chou1990uni}, torus \cite{konig1998diameter}, star graph \cite{day1993unidirectional,fujita2013oriented}, and (n,k)-star graph \cite{cheng2002unidirectional}.

Oriented diameter problem finds a significant application in parallel computing.
In interconnection networks of parallel processing systems, the processing elements are connected together using fibre optic links that support high bandwidth, high speed and long distance data communication. However, the
optical transmission medium suffers from the drawback that the links are
inherently unidirectional \cite{chou1990uni}.  In the case of optical links, a
naive strategy to achieve bidirectional communication is to use two separate
optical links between every pair of communication entities. But, such a naive
approach increases the hardware complexity and cost of the network. On the
other hand,  unidirectional communication links are simple and cost effective
but require more number of intermediate communication hops to establish
bidirectional communication. Thus, the average interprocessor communication
delay is generally more in the case of unidirectional interconnection networks.
However, unidirectional interconnection networks might be the best
choice if we can trade off communication delay with cost and hardware
complexity of the network. 

\subsection{The $n$-star graph ($S_n$)}

In \cite{akers1989group}, Akers and Murthy presented a group theoretic model called {\em Cayley Graph Model} for designing symmetric interconnection networks. In parallel computing the interconnection networks provide an efficient communication mechanism among the processors and the associated memory. For a finite group $\Gamma$ and a set $S$ of generators of $\Gamma$, the Cayley Graph $D = D(\Gamma ,S)$ is the directed graph defined as follows. The vertex-set of $D$ is $\Gamma$. There is an arc from a vertex $u$ to a vertex $v$ in $D$, if and only if there exist a generator $g$ in $S$ such that $ug=v$. Further, if the inverse of every element in $S$ is also in $S$, the two directed edges between $u$ and $v$ are replaced by a single undirected edge, resulting in an undirected graph. In \cite{akers1987star}, Akers and Murthy proposed a new
symmetric graph, called {\em Star Graph}, $\SN$. Let $G$ be a group with elements being all permutations of
the set $\{1, 2, \ldots ,n\}$ and group operation being composition. The star
graph $\SN$ is a Cayley graph on $G$ with generator set
$S=\{g_2, g_3,\ldots, g_n\}$, where $g_i$ is the permutation obtained by 
swapping the first and
$i^{th}$ value of the identity permutation. It is easy to see that, $\SN$ has
degree $n-1$, and it has been shown that the diameter of $\SN$ is
$\floor{3(n-1)/2}$ \cite{akers1987star}. The star graph has many desirable properties of a good interconnection network such as symmetry (vertex transitivity), small diameter, small degree and large connectivity. A symmetric interconnection network allows the use of same routing algorithm for every node, while a small degree reduces the cost of the network. Further, a small diameter reduces overall communication delay and large connectivity offers good fault tolerance.

Two different strong orientation schemes have been proposed for $\SN$. The first one was by K. Day and A. Tripathi
\cite{day1993unidirectional}.  They showed that the diameter of their
orientation is at most $5(n-2)+1$. The second orientation scheme was proposed
by S. Fujita \cite{fujita2013oriented}. The diameter of this orientation scheme
was shown to be at most $\ceil{5n/2}+2$. We observe that these two orientation
schemes are essentially the same. Both the schemes partition the set of
generators into nearly equal halves.  The edges due to first set of generators are oriented
from the odd permutation to the even permutation and those due to the second set of
generators are oriented in the opposite direction. The difference between the
two orientation schemes lies in the way by which the two schemes partition the
set of generators. The Day-Tripathi scheme splits the set of generators based
on the parity of $i$ of a generator $g_i$, \ie generators with odd parity for
$i$ belong to the first set and even parity for $i$ belong to the second set.
In the case of Fujita's orientation, the first partition consists of generators
from $g_2$ to $g_k, k = \ceil{(n-1)/2} + 1$, whereas, the second partition
consists of generators from $g_{k+1}$ to $g_n$. The details of the two
orientation schemes described above are depicted in
Fig.~\ref{Day-Tripathi-Fujita}, for two nodes with labels 12345 and 21345, and
their neighbours in $S_5$.

\begin{figure*}[htp]
\centering
\scalebox{.7}{\begin{tikzpicture}
\usetikzlibrary{arrows}
\usetikzlibrary{decorations.markings,arrows.meta}
\usetikzlibrary{shapes}
\tikzset{middlearrow/.style={
        decoration={markings,
        	mark= at position .25 with {\arrow{#1}} ,
        },
        postaction={decorate}
    }
}
\tikzset{endarrow/.style={
        decoration={markings,
        	mark= at position .5 with {\arrow{#1}} ,
        },
        postaction={decorate}
    }
}
\tikzstyle{every node}=[ultra thick, draw=black, minimum width=50pt,
    align=center]
    \tikzset{
   ultra thin/.style= {ellipse,line width=1.6pt},
    ultra thick/.style={rectangle,line width=1.6pt}
}
%\tikzstyle{every node}=[ultra thin, draw=black, ellipse, minimum width=50pt,
%    align=center]
%    \tikzset{
%    ultra thin/.style= {line width=0.1pt},
%    %ultra thick/.style={line width=1.6pt}
%}

\node[ultra thick] (a) {12345};
\node[ultra thin,left=100pt, below=40pt] (b) at (a) {52341};
\node[ultra thin,left=100pt] (c) at (a) {42315};
\node[ultra thin,left=100pt, above=40pt] (d) at (a) {32145};
\node[ultra thin,right=100pt] (e) at (a) {21345};
\node[ultra thick,right=100pt, below=40pt] (f) at (e) {51342};
\node[ultra thick,right=100pt] (g) at (e) {41325};
\node[ultra thick,right=100pt, above=40pt] (h) at (e) {31245};

\draw[middlearrow={>[scale=2.0]}] (b) -- node[sloped,font=\small,draw=none,below]{5}(a);
\draw[endarrow={Stealth[scale=2.0]}] (b) -- (a);

\draw[middlearrow={>[scale=2.0]}] (c) -- node[sloped,font=\small,draw=none,below]{4}(a);
\draw[endarrow={Stealth[scale=2.0]}] (a) -- (c);

\draw[middlearrow={>[scale=2.0]}] (a) -- node[sloped,font=\small,draw=none,below]{3}(d);
\draw[endarrow={Stealth[scale=2.0]}] (d) -- (a);

\draw[middlearrow={>[scale=2.0]}] (a) -- node[sloped,font=\small,draw=none,above]{2}(e);
\draw[endarrow={Stealth[scale=2.0]}] (a) -- (e);

\draw[middlearrow={>[scale=2.0]}] (e) -- node[sloped,font=\small,draw=none,below]{5}(f);
\draw[endarrow={Stealth[scale=2.0]}] (e) -- (f);

\draw[middlearrow={>[scale=2.0]}] (e) -- node[sloped,font=\small,draw=none,below]{4}(g);
\draw[endarrow={Stealth[scale=2.0]}] (g) -- (e);

\draw[middlearrow={>[scale=2.0]}] (h) -- node[sloped,font=\small,draw=none,below]{3}(e);
\draw[endarrow={Stealth[scale=2.0]}] (e) -- (h);
\node[draw=none, minimum width=2pt, above=30pt] (x) at (a) {};
\node[draw=none, minimum width=2pt, right=1mm, label=right:{\,\,\,Fujita's orientation}] (w) at (x) {};
\draw[middlearrow={>[scale=2.0]}] (w) -- ++(.30cm,0)(x);

\node[draw=none, minimum width=2pt, above=10pt] (y) at (x) {};
\node[draw=none, minimum width=2pt, right=.8mm, label=right:{\,\,\,Day and Tripathi orientation}] (z) at (y) {};
\draw[endarrow={Stealth[scale=2.0]}] (z) -- ++(.30cm,0) (y);

\node[ultra thick, below=30pt, label=right:{\,Even Signed Node}] (p) at (a){};
\node[ultra thin, below=10pt, label=right:{\,Odd Signed Node}] (q) at (p) {}; 

\end{tikzpicture}}
\caption{Day-Tripathi and Fujita orientation schemes for $S_5$}
\label{Day-Tripathi-Fujita}
\end{figure*}

In this paper, we propose a new distributed routing algorithm for the same
$\USN$ analysed by Fujita. We show that the proposed algorithm routes a 
packet in $\USN$ from any node $s$ to any other node $t$ using at most 
$\min\{4d +4, 2n+4\}$ hops, where $d$ is the distance between $s$ and 
$t$ in $\SN$.  In particular, this shows that the (directed) diameter of 
$\USN$ is at most $2n+4$, which is an improvement over Fujita's upper bound. We also show that the diameter of $\USN$ is at least $2n$ when $n \geq 7$, thereby showing that our upper bound is tight up to an additive factor. We do not believe that either of the above orientations of $\SN$ are optimal in terms of achieving the minimum (directed) diameter. In fact, we believe that the oriented diameter of $\SN$ is $3n/2 + O(1)$. 

\section{Preliminaries}
\subsection{Graph Terminology}
Some of the basic definitions in graph theory which are required to understand
the details of this work are explained in this section.  Let $G=(V,E)$ be any
undirected graph with vertex-set $V$ and edge-set $E$.  Two vertices of
$G$ are called {\em neighbours} when they are connected by an edge. The {\em
degree} of a vertex $u$ is the number of neighbours of $u$. If all the vertices
of $G$ have the same degree, $G$ is called {\em regular}. The {\em distance} between two nodes $u$ and $v$, denoted by $d(u,v)$, is the number of edges along a shortest path between $u$ and $v$. The {\em diameter} of $G$, denoted by $\diam({G})$, is the maximum of $d(u,v)$ among all $u,v \in V$. An {\em automorphism} of $G$ is a permutation $\pi$ of $V$ such that  for every pair of vertices $u,v \in V$, $\{u,v\}$ is an edge in $E$, if and only if $\{\pi(u),\pi(v)\}$ is an edge in $E$. Two vertices $u$ and $v$ of $G$ are said to be \emph{similar} if there is an automorphism $\pi$ of $G$ with $\pi(u) = v$. $G$ is {\em vertex-transitive} when every pair of vertices in $G$ are similar. 

Let $D=(V,E)$ be any directed graph with vertex-set $V$ and edge-set $E$. If
$(u,v)$ is an edge (arc) in $E$, then $u$ is called an {\em in-neighbour} of
$v$ and $v$ is called an {\em out-neighbour} of $u$. The {\em in-degree} and
{\em out-degree} of a vertex $u$ are, respectively, the number of in-neighbours
and out-neighbours of $u$. The {\em distance} from a node $u$ to a node $v$,
denoted by $\orient{d}(u,v)$, is the number of edges along a shortest directed
path from $u$ to $v$. The {\em diameter} of $D$ ($\diam({D})$), is the maximum of
$\orient{d}(u,v)$ among all $u,v \in V$. Automorphism and Vertex-transitivity among directed graphs are defined similar to that of undirected graphs. 
%An {\em automorphism} of $D$ is a
%permutation $\pi$ of $V$ such that, for every pair of vertices $u, v \in V$,
%$(u,v)$ is an arc in $E$, if and only if $(\pi(u),\pi(v))$ is an arc in $E$.
%Two vertices $u$ and $v$ of $D$ are said to be \emph{similar} if there is an
%automorphism $\pi$ of $D$ with $\pi(u) = v$. $D$ is {\em vertex-transitive}
%when every pair of vertices are similar.

\subsection{Cycle Structure of Permutations}
Let $\pi$ be a permutation of $\{1,\ldots,n\}$. The {\em sign} of $\pi$,
denoted by $\sign(\pi)$, is defined as the parity of the number of inversions
in $\pi$, that is $x,y \in \{1,\ldots,n\}$, such that $x < y$ and $\pi(x) >
\pi(y)$. A {\em cycle} $(a_0, \ldots, a_{k-1})$ is a permutation $\pi$ of
$\{a_0, \ldots, a_{k-1}\}$ such that $\pi(a_i) = a_{i+1}$ where addition is
modulo $k$.  Two cycles are disjoint if they do not have common elements. Every
permutation of $[n]$ has a unique decomposition into a product of disjoint
cycles. The sign of a permutation turns out to be the parity of the number of
even-length cycles in that permutation. 

One hop in an $n$-star graph corresponds to moving from a permutation $\sigma$
to another permutation $\pi$, by exchanging the value $\sigma(1)$ with a value
$\sigma(k)$, $k \in \{2, \ldots n\}$. We would like to make some observations
about the cycle structure of $\pi$ and $\sigma$. In the case when $1$ and $k$
belong to the same cycle of $\sigma$, this cycle gets broken into two disjoint
cycles in $\pi$ (Fig.~\ref{cycle-break}).
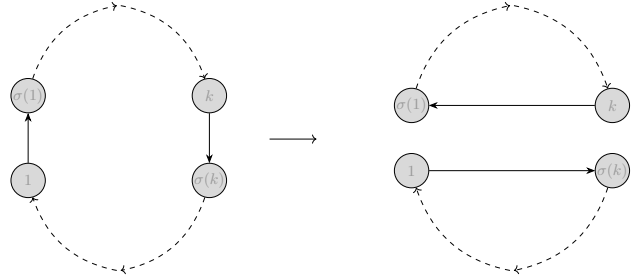
\begin{figure}[htp]
\centering
\scalebox{.65}{\begin{tikzpicture}[
      mycircle/.style={
         circle,
         draw=black,
         fill=gray,
         fill opacity = 0.3,
         text opacity=1,
         inner sep=0pt,
         minimum size=20pt,
         font=\small},
         state/.style={circle,inner sep=0pt, minimum size=2pt},
      myarrow/.style={-Stealth},
      node distance=2cm and 2cm
      ]
     
  \node[state](c1){};
  \node[state,below =5.4cm] (c4) at (c1){};   
  \node[mycircle, below right=1.6cm] (c2) at (c1) {$k$};
\node[mycircle,above right = 1.6cm] (c3) at (c4) {$\sigma(k)$};
%\node[state, below left= 2cm ](c4) at (c3){};
  \node[mycircle,below left= 1.6cm ](c5) at (c1){$\sigma(1)$};
 \node[mycircle,above left = 1.6cm](c6) at (c4){$1$};
\node[state,below =2.7cm](c7) at (c1){};
  \node[state,right =3cm](c8) at (c7){};
  \node[state,right =1cm](c9) at (c8){};
\draw[myarrow] (c2) to node {} (c3);
\draw[dashed,->](c1)  [bend left] edge node[above]{}(c2);
\draw[dashed,->](c3)  [bend left] edge node[above]{}(c4);
\draw[dashed,->](c4)  [bend left] edge node[above]{}(c6);
\draw[myarrow](c6) to node[above]{}(c5);
\draw[dashed,->](c5)  [bend left] edge node[above]{}(c1);
\draw[->](c8)--(c9);
\node[state,right =8cm](c10) at (c1){};
\node[state,below =5.4cm](c11) at (c10){};
\node[mycircle, below right=1.8cm] (c12) at (c10) {$k$};
  \node[mycircle,above right = 1.8cm] (c13) at (c11) {$\sigma(k)$};
  \node[mycircle,below left= 1.8cm ](c14) at (c10){$\sigma(1)$};
  \node[mycircle,above left = 1.8cm](c15) at (c11){$1$};

\draw[myarrow] (c12) to  node {} (c14);
\draw[dashed,->](c10)  [bend left] edge node[above]{}(c12);
\draw[dashed,->](c14)  [bend left] edge node[above]{}(c10);
\draw[dashed,->](c13)  [bend left] edge node[above]{}(c11);
\draw[myarrow](c15)   edge node[above]{}(c13);
\draw[dashed,->](c11)  [bend left] edge node[above]{}(c15);

\end{tikzpicture}}
\caption{The change in the cycle structure of $\sigma$, when $\sigma(1)$ 
is swapped with $\sigma(k)$, when $1$ and $k$ belong to the same cycle.}
\label{cycle-break}
\end{figure}
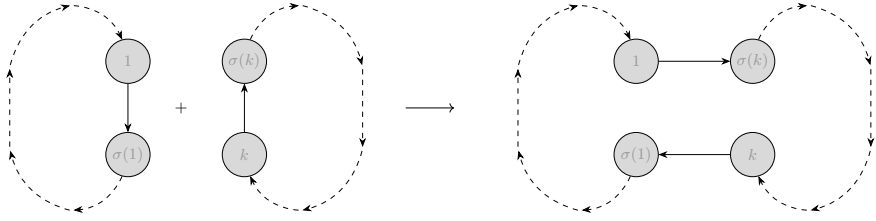
\begin{figure}[htp]
\centering
\scalebox{.67}{\begin{tikzpicture}[
      mycircle/.style={
         circle,
         draw=black,
         fill=gray,
         fill opacity = 0.3,
         text opacity=1,
         inner sep=0pt,
         minimum size=25pt,
         font=\small},
         state/.style={circle,inner sep=0pt, minimum size=2pt},
      myarrow/.style={-Stealth},
      node distance=0.6cm and 1.2cm
      ]

\node[state](c1){};
\node[state,below=2cm](c2) at (c1){};
\node[state,below=2cm](c3) at (c2){};
\node[state,below left=1.2cm](c4) at (c1){};
%\node[state,below = 4cm](c5) at (c4){};
\node[state,above left=1.2cm] (c5) at (c3){};
\node[mycircle,below right=.8cm](c6) at (c1){$1$};
\node[mycircle,above right=.8cm](c7) at (c3){$\sigma(1)$};
\draw[dashed,-Stealth](c1)  [bend left] edge node[above]{}(c6);
\draw[dashed,-Stealth](c4)  [bend left] edge node[above]{}(c1);
\draw[dashed,-Stealth](c5)  edge node[above]{}(c4);
\draw[dashed,-Stealth](c7)  [bend left] edge node[above]{}(c3);
\draw[dashed,-Stealth](c3)  [bend left] edge node[above]{}(c5);
\draw[myarrow] (c6) to  node {} (c7);

\node[state,right=2cm](c8) at (c2){+};

\node[state,right=4.5cm](c9) at (c1){};
\node[state,below=2cm](c10) at (c9){};
\node[state,below=2cm](c11) at (c10){};
\node[mycircle,below left=.8cm](c12) at (c9){$\sigma(k)$};
\node[mycircle,above left=.8cm](c13) at (c11){$k$};
\node[state,below right=1.2cm](c14) at (c9){};
\node[state,above right=1.2cm](c15) at (c11){};

\draw[dashed,-Stealth](c9)  [bend left] edge node[above]{}(c14);
\draw[dashed,-Stealth](c14)  edge node[above]{}(c15);
\draw[dashed,-Stealth](c15)  [bend left] edge node[above]{}(c11);
\draw[dashed,-Stealth](c11)  [bend left] edge node[above]{}(c13);
\draw[dashed,-Stealth](c12)  [bend left] edge node[above]{}(c9);
\draw[myarrow] (c13) to node {} (c12);

\node[state,right=2cm](c16) at (c10){};
\node[state,right=1cm](c32) at (c16){};
\draw[->](c16)--(c32);

\node[state,right=5.5cm](c17) at (c9){};
\node[state,below=2cm](c18) at (c17){};
\node[state,below=2cm](c19) at (c18){};
\node[mycircle,below right=.8cm](c20) at (c17){$1$};
\node[mycircle,above right=.8cm](c21) at (c19){$\sigma(1)$};
\node[state,below left=1.2cm](c22) at (c17){};
\node[state,above left=1.2cm](c23) at (c19){};

\draw[dashed,-Stealth](c17)  [bend left] edge node[above]{}(c20);
\draw[dashed,-Stealth](c22)  [bend left] edge node[above]{}(c17);
\draw[dashed,-Stealth](c23)  edge node[above]{}(c22);
\draw[dashed,-Stealth](c19)  [bend left] edge node[above]{}(c23);
\draw[dashed,-Stealth](c21)  [bend left] edge node[above]{}(c19);

\node[state,right=4.5cm](c24) at (c17){};
\node[state,below=2cm](c25) at (c24){};
\node[state,below=2cm](c26) at (c25){};
\node[mycircle,below left=.8cm](c30) at (c24){$\sigma(k)$};
\node[mycircle,above left=.8cm](c29) at (c26){$k$};
\node[state,below right=1.2cm](c27) at (c24){};
\node[state,above right=1.2cm](c28) at (c26){};

\draw[dashed,-Stealth](c24)  [bend left] edge node[above]{}(c27);
\draw[dashed,-Stealth](c27)  edge node[above]{}(c28);
\draw[dashed,-Stealth](c28)  [bend left] edge node[above]{}(c26);
\draw[dashed,-Stealth](c26)  [bend left] edge node[above]{}(c29);
\draw[dashed,-Stealth](c30)  [bend left] edge node[above]{}(c24);
\draw[myarrow] (c20) to node {} (c30);
\draw[myarrow] (c29) to  node {} (c21);

\end{tikzpicture}
	}
\caption{The change in the cycle structure of $\sigma$ when $\sigma(1)$ 
is swapped with $\sigma(k)$, when $1$ and $k$ belong to different cycle.}
\label{cycle-merge}
\end{figure}
Notice that, if $\sigma(1)=k$, then 
one of the resulting cycles is a singleton. In the case when $1$ and $k$ belong
to different cycles of $\sigma$, these two cycles merge and form a single cycle
in $\pi$ (Fig.~\ref{cycle-merge}).

Given two permutations $\pi$ and $t$, we call the cycles of $\pi \circ t^{-1}$
as the cycles of $\pi$ {\em relative} to $t$.  The above observations about the
cycle structure of two permutations $\pi$ and $\sigma$ which differ by a single
swap between $1$ and $k$ will apply in this case to the relative cycle
structure of $\pi$ and $\sigma$ with respect to $t$.

\subsection{Routing in undirected star graph}
\label{secUndirectedRouting}

In this section, we describe the routing algorithm for the undirected star
graph  $\SN$ presented in \cite{akers1989group}. Assume that a node labelled
$c$ forwards a packet $P$ from a source $s$ to a destination $t$. The
destination label $t$ is available in the packet.  Upon receiving the packet,
$c$ accepts $P$ if $c$ is same as $t$.  Otherwise, when $c(1) \neq t(1)$, the
node $c$ forwards $P$ through the link labelled $i$, where $i$ is the position of
$c(1)$ in $t$.  We call such a move a {\em settling move}. A value is called
{\em settled} if it is in the same location in $c$ and $t$, and {\em unsettled}
otherwise. When $c(1)=t(1)$ (but $c \neq t$), the node $c$ forwards $P$ through
a link $i$, where $i$ is the position of an unsettled value. We call such move
a {\em seeding move}.  Notice that, during the course of routing $P$ from $s$
to $t$, the number of seeding moves is same as the number of non-singleton
cycles in $s$ relative to $t$. Also, no move disturbs an already settled value.
Therefore, we can observe that the total number of steps required to settle all
unsettled values in $s$, denoted by $d$, is at most $m(s,t)+c(s,t)$, 
where $m(s,t)$ is the number of mismatched values (\ie values that are not in 
their correct position with respect to $t$) and $c(s,t)$ is the number of 
non-singleton cycles in $s$ relative to $t$.  More closer analysis yield 
\cite{akers1989group} the following result.
\begin{equation}\label{eqnUndirected}
d =
    \begin{cases}
      m(s,t)+c(s,t), & \text{if}\ s(1)=t(1)\\
      m(s,t)+c(s,t)-2, & \text{otherwise}.
    \end{cases}
\end{equation}
It is not difficult to argue that the above routing algorithm is 
optimal and hence $d$ is the distance between $s$ and $t$ in $\SN$.
\section{The proposed routing algorithm}
There are two different ways in which one can describe and analyse a routing
algorithm on a star graph.  In the first view, which we call the ``network
view'', we consider each vertex of $\SN$ as a communication node whose {\em
address} is the permutation labelling that vertex.  Depending on the sign of
the address of a node, we classify it as an {\em even node} or an {\em odd
node}.  We consider each arc of $\USN$ as a unidirectional communication link
and label it by the unique position in $\{2, \ldots, n\}$, where the addresses
of the endpoints of the arc differ. Hence every node has $n-1$ links attached
to it with unique labels from $\{2, \ldots, n\}$. For an even node, the links
labelled $2$ to $\ceil{(n-1)/2} + 1$ are outgoing links and the remaining are
incoming. The situation is reversed for odd nodes. Every packet that is to be
routed along the network will have the destination address in its
header. We describe the algorithm by which a node, on receiving a packet not
destined for itself, selects the outgoing link along which to relay that
packet.  This selection is based on the address of the current node and the
destination address. 

In the second view, which we call the ``sorting view'', we consider each vertex
of $\SN$ as a permutation of $[n]$. Thus a routing is viewed as a step-by-step
procedure to sort the permutation labelling the source to the permutation
labelling the destination. Each step in this sorting is restricted to be a
transposition $(1, i)$, where $i \in \{2, \ldots, \ceil{(n-1)/2} + 1\}$, if the
current permutation is even, and $i \in \{\ceil{(n-1)/2}+2, \ldots, n\}$, if
the current permutation is odd. Hence a directed path in $\USN$ will correspond
to an alternating sequence of right half and left half transpositions.  This is
the view with which we will analyse our routing algorithm in
Section~\ref{secAnalysis}.

In a given permutation, let us call the positions $2$ to $\ceil{(n-1)/2} + 1$
as the {\em left half}, and the positions $\ceil{(n-1)/2} + 2$ to $n$ as the 
{\em right half}. First, we analyse the case of sorting a permutation $\pi$ in 
which all the left values are in a derangement in the left half itself, and all 
the right values are in a derangement in the right half itself. For every $n 
\geq 5$, an example for $\pi$ is the permutation obtained by cyclically 
shifting the left-half and right-half by one position each. That is, the cycle 
decomposition of $\pi$ is $(1)(2, \ldots, k)(k+1, \ldots, n)$, where $k =
\ceil{(n-1)/2}+1$.  This analysis serves two purposes. Firstly, it establishes 
a lower bound on the diameter of $\USN$. Secondly, it illustrates a typical run 
of our proposed algorithm to be described later. Let $\pi = \pi^0, \ldots, 
\pi^l = id$ be the nodes of a shortest directed path from $\pi$ to $id$ in 
$\USN$. Notice that, in $\pi$, every value except $1$ is not in its ``correct'' 
position (with respect to the identity permutation) and hence needs to be 
moved. This requires a transposition $a = (1, \pi^{-1}(i))$ to remove $i$ from 
its present position and a transposition $b = (1, i)$ to place $i$ in its final 
position. Let $\alpha$ and $\beta$ be, respectively, the permutations in 
$\{\pi^1, \ldots, \pi^l\}$ which appear immediately after the transposition $a$ 
and immediately before the transposition $b$. Notice that $\alpha(1) = \beta(1) 
= i$.  The key observation is that $\alpha$ and $\beta$ cannot be the same 
permutation. This is because, for every $i \in \{2, \ldots, n\}$, both 
$\pi^{-1}(i)$ and $i$ are in the same half and the directions in $\USN$ 
constraints one to alternate between left half and right half transpositions. 
Hence for every $i \in \{2, \ldots, n\}$, there exists at least two distinct 
permutations in $\{\pi^0, \ldots, \pi^l\}$ which has $i$ in the first position. 
Moreover $\pi^0(1) = \pi^l(1)=1$ (\ie. the value 1 appears in the first position for 
at least two permutations). Thus $l+1 \geq 2n$ and hence the length of the path 
is at least $2n-1$. If $\pi$ was an even permutation, we could have improved 
the lower bound by $1$, since the distance between two even permutations has to 
be even.  This is indeed the case when $n$ is odd. When $n$ is even and $n \geq 
8$ (and thereby $k \geq 5$, we can choose $\pi$ to be 
$(1)(2,3)(4,\ldots,k)(k+1,\ldots, n)$. This improvement does not work for $n = 6$, and it is indeed established by computer simulation that $\diam{\USN} = 2n-1$ when $n = 6$ \cite{day1993unidirectional}. Hence we conclude
%\squeezeup
 \begin{thm}\label{thmLowerBound}
For every $n \geq 5$ the diameter of $\USN$ is at least $2n-1$. Further if
$n \neq 6$, the diameter of $\USN$ is at least $2n$.
\end{thm}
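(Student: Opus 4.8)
\emph{Overall approach.} It suffices to exhibit, for each $n$, a single source $\pi$ with $\orient{d}(\pi,id)$ as large as claimed, since $\USN$ is strongly connected and hence this directed distance is finite and bounded above by $\diam(\USN)$. I would choose $\pi$ so that, along every directed path from $\pi$ to $id$, each value of $\{2,\ldots,n\}$ is forced to sit in the first position on at least two distinct vertices of the path; since a permutation is determined once its first coordinate and the remaining entries are fixed, and distinct first values certainly give distinct permutations, these occurrences — together with the two appearances of the value $1$ at the endpoints of the path — pin down at least $2n$ distinct vertices on the path, whence its length is at least $2n-1$.

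\emph{The instance and the count.} For the first bound take $\pi$ with cycle decomposition $(1)(2,\ldots,k)(k+1,\ldots,n)$, $k=\ceil{(n-1)/2}+1$; for every $n\ge5$ both nontrivial cycles have length at least $2$, so $\pi$ is a derangement of $\{2,\ldots,n\}$ in which, for each value $i$, its current position $\pi^{-1}(i)$ and its target position $i$ lie in the same half (left positions $2,\ldots,k$ or right positions $k+1,\ldots,n$). Fix a shortest directed path $\pi=\pi^0,\ldots,\pi^l=id$, and for $i\in\{2,\ldots,n\}$ let $\alpha_i$ be the vertex immediately after the first step that removes $i$ from $\pi^{-1}(i)$, and $\beta_i$ the vertex immediately before the step that finally installs $i$ at position $i$; then $\alpha_i(1)=\beta_i(1)=i$. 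The crucial point is $\alpha_i\ne\beta_i$: were they equal, the removal transposition $(1,\pi^{-1}(i))$ would be immediately followed by the insertion transposition $(1,i)$, two consecutive steps moving positions in the \emph{same} half — impossible, since along any directed path in $\USN$ the moved positions alternate strictly between the left and the right half. So $\{\alpha_i,\beta_i : i\in\{2,\ldots,n\}\}$ is a set of $2(n-1)$ distinct vertices, and $\pi^0\ne\pi^l$ add two more (with $1$ in position $1$); thus $l+1\ge2n$, i.e. $\orient{d}(\pi,id)\ge2n-1$.

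\emph{The parity refinement (for $n\ne6$).} Since a transposition flips the sign, the directed distance between two permutations of equal sign is even. The instance above has sign $(-1)^{(k-2)+(n-k-1)}=(-1)^{n-1}$, so it is even exactly when $n$ is odd; then $l$ is even and $l\ge2n-1$ forces $l\ge2n$. When $n$ is even and $n\ge8$ (hence $k\ge5$) I would use instead $\pi=(1)(2,3)(4,\ldots,k)(k+1,\ldots,n)$, still a derangement of $\{2,\ldots,n\}$ whose three nontrivial cycles lie inside $\{2,3\}$, $\{4,\ldots,k\}$, $\{k+1,\ldots,n\}$ respectively, so the same count yields $l\ge2n-1$; splitting the left cycle of length $k-1$ into lengths $2$ and $k-3$ multiplies the sign by $-1$, making $\pi$ even exactly when $n$ is even, so again $l\ge2n$. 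This refinement is unavailable for $n=6$, but that case is excluded from the stronger claim, and there one invokes the computer-verified value $\diam(\USN)=2n-1$ from \cite{day1993unidirectional}.

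\emph{Expected obstacle.} The one subtle link is $\alpha_i\ne\beta_i$, and it rests on exactly two ingredients: that the sequence of moved positions along a directed path in $\USN$ strictly alternates between the two halves (a consequence of the orientation rule together with the sign-flip at each step — this is where the specific Fujita/Day--Tripathi orientation enters), and that for the chosen $\pi$ the positions $\pi^{-1}(i)$ and $i$ genuinely lie in the same half for \emph{every} $i\in\{2,\ldots,n\}$ (a direct check on the cycle structure, which must be re-verified for the modified even-$n$ instance). Once these are in hand, the remaining bookkeeping — distinctness of the $2n$ vertices exhibited, and the two short sign computations — is routine.
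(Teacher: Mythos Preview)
Your proposal is correct and follows essentially the same argument as the paper: the same instance $\pi=(1)(2,\ldots,k)(k+1,\ldots,n)$, the same pair $\alpha_i,\beta_i$ of occurrences of $i$ in the first position, the same ``consecutive same-half transpositions'' contradiction to force $\alpha_i\ne\beta_i$, and the same parity refinement (with the same modified instance for even $n\ge8$ and the same appeal to \cite{day1993unidirectional} for $n=6$). Your write-up is in fact slightly more careful than the paper's in pinning down $\alpha_i$ as the first removal and $\beta_i$ as the last installation, and in making the sign computations explicit.
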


Now let us see a way to sort the permutation $\pi =
(1)(2,\ldots,k)(k+1,\ldots,n)$ for an odd $n \geq 5$ and $k = \ceil{(n-1)/2} + 1$. We do not attempt to rigorously justify the claims made in the following discussion as they are proved in more generality in Section~\ref{secAnalysis}. We do the sorting in two phases. In the first phase (the {\em crossing phase}), we obtain a permutation $\gamma$ in which all the values in $\{2, \ldots, k\}$ (the {\em small values}) are in the right half and all the values in $\{k+1, \ldots, n\}$ (the {\em big values}) are in the left half.  This can be done in $n+1$ steps; the first step places $1$ in the left half (seeding move) and all the subsequent steps either places a small value in the right half or a large value in the left half (crossing moves). Only thing one has to be careful about is to remove $1$ from the
left half only in the last transposition. For example, one can attack the
positions $2,k+1,3,k+2,\ldots,k-1,n,2$ in that order to arrive at $\gamma =
(1)(2,k+1)(3,k+2)\cdots(n-1,n)$. In the second phase (the {\em settling
phase}), when a value $i$, $i \neq 1$ appears in the first position for the
first time, in the very next step we will settle it, i.e., place it in position $i$.  This will be possible since, $\gamma^{-1}(i)$ and $i$ are in different halves for all $i \in {2, \ldots.  n}$. This phase could have been completed in $n$ steps provided the elements $\{2, \ldots, n\}$ formed a singe cycle in $\gamma$. Otherwise, after placing all the elements in a cycle of $\gamma$ to their correct positions, $1$ will return to the first position.  This results in one extra move (a {\em seeding move}) per non-singleton cycle of $\gamma$. An extreme example of this can be seen by analysing the case when $\gamma$ is as above, wherein one requires $\ceil{n/2}-1$ seeding moves.  Hence the number of moves in the settling phase is $n-1 + c(\gamma)$ where $c(\gamma)$ is the number of non-singleton cycles in $\gamma$.  Since the number of non-singleton cycles in any permutation of $[n]$ is at most $\floor{n/2}$, one quickly sees that $\pi$ can be sorted in a total of $\floor{5n/2}$ steps. One can then easily extend this analysis to an arbitrary permutation in place of $\pi$ and show that the diameter of $\USN$ is at most $5n/2 + O(1)$, reproving the bound of Fujita \cite{fujita2013oriented}. But we show that we have enough freedom while building $\gamma$ to ensure that $\gamma$ consists of at most two non-singleton cycles. This is done by showing that during all but the final two transpositions of the crossing phase,	we can select the swaps so as not to complete a new cycle among the crossed values. This is what helps us in achieving the bound of $2n + O(1)$ on the diameter of $\USN$.

One drawback of the above method is that, even if the source permutation $\pi$ is very close to the identity permutation in terms of distance in $\SN$ like $\pi = (1)(2,3)(4)(5)\cdots(n)$, this method may take $2n$ steps. Hence, we
modify the above method by making sure that, if $\pi$ has $m$ small values and
$m$ large values which are already in their correct positions, then those $2m$
values are not disturbed during the sorting. We then analyse this strategy to
show that any permutation $\pi$ can be sorted in at most $4d+4$ steps, where
$d$ is the distance between $\pi$ and $id$ in $\SN$.

These attempts to reduce the number of cycles in $\gamma$ and to disturb as few
settled values in $\pi$ as possible is what makes the crossing phase of the
routing algorithm slightly complex.  Moreover, when $\pi(1) \neq 1$, we have
two possibilities.  If $\pi$ is even and $\pi(1)$ is a large value, we continue
as if we are in the crossing phase. We do the same when $\pi$ is odd and
$\pi(1)$ is a small value. In the other two cases ($\pi$ even, $\pi(1)$ small
and $\pi$ odd, $\pi(1)$ large), we start by settling $\pi(1)$ and continue in
the settling phase till either $1$ appears in the first position or one of the
two cases mentioned above occurs.  Then we go into the crossing phase, complete
it, and enter the settling phase for a second time. Hence one cycle of the
settling phase can happen before the crossing phase. With this high-level idea,
we formally state our proposed routing algorithm.

\begin{defn}\label{DataStructures}
For a permutation $s$ of $[n]$, we call 
$L(s) = \{s(i): 2 \leq i \leq \ceil{(n-1)/2} + 1\}$ and
$R(s) = \{s(i): \ceil{(n-1)/2} + 2 \leq i \leq n\}$ as the 
sets of {\em left values} and {\em right values} of $s$, respectively.

Given two permutations $s$ and $t$ of $[n]$ for some $n$,
we define $S(s,t) = \{s(i) : s(i) = t(i), 1 \leq i \leq n \}$,
and $U(s,t) = [n] \setminus S(s,t)$ respectively, as the sets of {\em settled} 
and {\em unsettled} values between $s$ and $t$. We partition 
$U(s,t) \setminus\{s(1), t(1)\}$ into four sets
%\squeezeup
\begin{align*}
	ULL(s,t) = U(s,t) \cap L(s) \cap L(t), \\
	URR(s,t) = U(s,t) \cap R(s) \cap R(t), \\
	ULR(s,t) = U(s,t) \cap L(s) \cap R(t), \\
	URL(s,t) = U(s,t) \cap R(s) \cap L(t).
\end{align*}
%\squeezeup
We also partition $S(s,t)$ into two sets
\begin{align*}
	SL(s,t) = S(s,t) \cap L(t), \quad&
	SR(s,t) = S(s,t) \cap R(t).
\end{align*}
Let us call $X(s,t) = ULR(s,t) \cup URL(s,t)$ as the set of
{\em crossed values} between $s$ and $t$.
A cycle of $s$ relative to $t$ is called {\em alternating} if
it has size at least two, and the successive elements of the cycle alternate between $L(s)$ and $R(s)$. 
Finally, $\chi(s,t)$ will denote the number of
alternating cycles of $s$ with respect to $t$.
\end{defn}
The processing done by an even node is given in Algorithm~\ref{RouteEven}. 
	The processing done by an odd node is similar (the roles of ``left'' and ``right'' are reversed) and hence omitted. In every 
	move, $c(1)$ is exchanged with $c(i)$ for some $i \in \{2, \ldots, n\}$.  
	We classify these moves into three types. 
	If $c(1) = t(1)$, the move is called a {\em seeding move}. 
	If $c(1) \neq t(1)$  and $c(1) = t(i)$, i.e., 
	$c(1)$ moves to its correct location in $t$, 
	it is called a {\em settling move}. 
	If $c(1) \in L(t)$ and it moves to the right half or 
	if $c(1) \in R(t)$ and it moves to the left half, 
	the move is called a {\em crossing move}.
\begin{algorithm*}[ht]
\begin{algorithmic}[1]
\caption{Processing done by an even node labelled $c$ 
	upon receiving a packet $P$ destined for a node labelled $t$.}
  \Procedure{RouteEven(Packet $P$)}{}
  	\State Receive packet $P$, extract the destination address $t$.
  	\State If the address of the current node $c$ is the same
      as $t$, accept $P$ and return.
  	\State Let $L_t=L(t)$, $R_t=R(t)$, $ULL=ULL(c,t)$, $URR=URR(c,t)$, 	
  		$SL=SL(c,t)$, $URL=URL(c,t)$.
  	\State {\em Case 1 (Settling Move):} $c(1) \in L_t$.
  	\State \hspace{.2cm} Let $i$ be the position of $c(1)$ in
   		the permutation $t$.
  	\State {\em Case 2 (Crossing/Seeding Move):} $c(1) \not\in L_t$ and 
  	$\card{ULL} + \card{URR} >0$.
    \Statex\hspace{.9cm}({\em Crossing} when $c(1)\in R_t$, {\em Seeding} 	
    	when $c(1)=t(1)$)
  	\State \hspace{.2cm}The forwarding link $i$ is selected based on the 
  	cycle structure of $c$ with respect to $t$. 
  		\State \hspace{.2cm}{\em Case 2.1:} When $ULL$ contains a value 
  			that is not part of the cycle containing $c(1)$.
		  \State \hspace{.5cm} Pick $i$ as the $c$-index of that value.
		\State \hspace{.2cm}{\em Case 2.2:} When all values in $ULL$ are 
			part of the cycle containing $c(1)$.
		  \State  \hspace{.5cm} Pick $i$ as the $c$-index of the value in 
		  	$ULL$ that comes first on traversing this cycle backward 
		  	from $c(1)$.
		\State  \hspace{.2cm}{\em Case 2.3:}When $ULL$ is empty
		  \State  \hspace{.5cm}Pick $i$ as the $c$-index of	any value
   			from $SL$.
   \State {\em Case 3(Crossing Move):} $c(1) \in R_t$ and 
   		$\card{ULL} + \card{URR} =0$.
  		\State  \hspace{.2cm}{\em Case 3.1(Final Crossing Move):}
  			When $\card{ULR} > 0$
  			\State \hspace{.5cm} Pick $i$ as the $c$-index of a value from  
  				$ULR$. If possible, select $i$ from an alternating cycle 
  				in $c$.
	  	\State  \hspace{.2cm}{\em Case 3.2(Final/Pre-Final Crossing Move):}
	  		When $\card{ULR} = 0$
		  \Statex\hspace{.9cm}({\em Final Crossing Move} when $t(1)$ is 
		  	picked, {\em Pre-Final Crossing Move} when a settled value 
		  	is picked)
		  	\State  \hspace{.5cm} Pick $i$ as the $c$-index of $t(1)$ if 
		  		possible, otherwise pick a settled value.
  	\State {\em Case 4 (Seeding Move):} $c(1) = t(1)$ and 
	 $\card{ULL}+ \card{URR}=0$.
		\State \hspace{.2cm} Pick $i$ as the $c$-index of a value from 
			$ULR$.	
	\State Send $P$ along the edge labelled $i$ and terminate.	
 \EndProcedure
\end{algorithmic}
\addtocounter{algorithm}{-1}
\caption{Processing done by an even node labelled $c$ 
	upon receiving a packet $P$ destined for a node labelled $t$.}\label{RouteEven}
\end{algorithm*}
\section{Analysis of the proposed routing algorithm}
\label{secAnalysis}
In this section, we are going to prove the following upper bound on the number
of hops that Algorithm~\ref{RouteEven} uses to reach from a node $s$ to a node
$t$ based on the relative structure of the permutations that label $s$ and $t$.
\begin{thm} \label{thmMain}
Let $s$ and $t$ be the permutations labelling any two nodes of the 
oriented star graph $\orient{\SN}$. Then, Algorithm~\ref{RouteEven} will 
send a packet from $s$ to $t$ in at most
\[
\card{X(s,t)}+\max\{6, y\}
\] steps, where
%\squeezeup 
\[
	y=4 \max \{\card{ULL(s,t)}, \card{URR(s,t)} \} +
	 \chi(s,t) + 4.
\]
\end{thm}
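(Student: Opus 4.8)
The plan is to track the number of moves of each type (seeding, settling, crossing) that the algorithm performs, and bound each count separately using the combinatorial quantities in Definition~\ref{DataStructures}. First I would set up an invariant-based accounting: as the packet travels from $s=c^0$ to $t=c^l$, at each node $c=c^j$ we can read off $ULL(c,t)$, $URR(c,t)$, $ULR(c,t)$, $URL(c,t)$ and the relative cycle structure of $c$ with respect to $t$, and the algorithm's case determines which kind of move is made. The key structural facts to establish are monotonicity statements: once a value is settled it stays settled (no move disturbs a settled value, as already noted for the undirected algorithm and which carries over here), and the crossing moves never ``uncross'' — i.e.\ $|ULL|+|URR|$ is non-increasing, and in fact each crossing move in Case 2 or Case 3 strictly decreases $|ULL(c,t)|+|URR(c,t)|$ unless it is the designated final/pre-final crossing move. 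I would prove that the total number of crossing moves needed to drive $|ULL|+|URR|$ to $0$ is $\max\{|ULL(s,t)|,|URR(s,t)|\}$ crossing-type moves for the majority half plus interleaved seeding moves — this is where the factor $4\max\{|ULL|,|URR|\}$ comes from, roughly two crossing moves and two supporting seeding/settling moves per unit of imbalance.

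Next I would analyze the settling phase (Cases 1, 3.1, 4). Here the controlling quantity is the number of non-singleton cycles of $c$ relative to $t$ that must be broken; each such cycle costs one extra seeding move when $1$ (or rather $t(1)$, the ``home'' value) returns to the first position. The crucial improvement over the $5n/2$ analysis — and this is the step I expect to be the main obstacle — is to show that the way Cases 2.1, 2.2, 2.3, and 3.1 choose the forwarding link guarantees that, by the time $|ULL|+|URR|=0$, the permutation $\gamma$ reached has at most a bounded number of non-singleton cycles among the crossed values, and moreover these are controlled by $\chi(s,t)$, the number of \emph{alternating} cycles. The intuition sketched in the text is that ``during all but the final two transpositions of the crossing phase we can select swaps so as not to complete a new cycle''; turning this into a rigorous invariant requires a careful case analysis of what happens to the cycle containing $c(1)$ under each of the subcases — in particular showing that Case 2.1 (picking a value \emph{outside} the current cycle) merges cycles rather than creating new small ones, Case 2.2 (picking the first $ULL$-value encountered traversing the cycle backward from $c(1)$) peels off crossed values without closing a cycle, and the alternating cycles are precisely the ones that cannot be avoided, contributing $+\chi$.

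Finally I would assemble the bound. Writing the total as (seeding moves) $+$ (settling moves) $+$ (crossing moves): the settling moves number at most one per mismatched value contribution plus one per surviving non-singleton cycle; the crossing moves number $2\max\{|ULL|,|URR|\}$ up to lower-order terms; combining and using $|X(s,t)| = |ULR(s,t)|+|URL(s,t)|$ to absorb the cost of settling the crossed values separately yields $|X(s,t)| + 4\max\{|ULL|,|URR|\} + \chi(s,t) + 4$. The additive $+4$ collects the at-most-one pre-crossing settling cycle (mentioned in the high-level discussion), the final and pre-final crossing moves, and the terminal seeding move; the $\max\{6, y\}$ wrapper handles the degenerate small cases where $\max\{|ULL|,|URR|\}$ and $\chi$ are both tiny but the path still needs a constant number of steps (e.g.\ when $\pi(1)$ must be settled first and a short cycle broken). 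I would organize the write-up as a sequence of lemmas: (i) settled values stay settled; (ii) $|ULL|+|URR|$ decreases correctly and the crossing phase terminates in the claimed number of moves; (iii) the cycle-avoidance invariant bounding non-singleton cycles of $\gamma$ by $\chi(s,t)+2$; (iv) the settling phase terminates in the claimed number of moves given (iii); then combine. Lemma (iii) is the heart of the argument and the place where the specific tie-breaking rules of Algorithm~\ref{RouteEven} must be used in full.
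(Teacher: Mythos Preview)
Your outline is essentially the paper's approach --- a phase decomposition with separate bounds on the crossing phase and the subsequent settling phase, the key lemma being that $\gamma$ (the permutation at the end of the crossing phase) has at most $\chi(s,t)+2$ non-singleton cycles. Your lemmas (ii)--(iv) correspond closely to the paper's Propositions~\ref{propPhaseOne}--\ref{propPhaseThree}, and your identification of the cycle-avoidance argument as the heart of the proof is correct.

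However, your lemma (i) is false as stated and would derail the accounting if you actually used it. Settled values do \emph{not} stay settled under Algorithm~\ref{RouteEven}: Case~2.3 explicitly picks a value from $SL$ when $ULL$ is empty (this happens whenever $\card{ULL} < \card{URR}$ during the crossing phase), and Case~3.2 may pick a settled value as well. So the invariant you import from the undirected routing does not carry over. The paper sidesteps this by never tracking settled values; instead it tracks $\card{X(\cdot,t)}$ directly and shows that each Phase~Two move increases $\card{X}$ by at most one, giving $\card{X(\gamma,t)} \le \card{X(\alpha,t)} + 2\max\{\card{ULL},\card{URR}\}$. That $2\max$ increment, together with the $2\max+1$ crossing moves themselves, is where the $4\max$ comes from --- not ``two crossing moves and two supporting seeding moves per unit'' as you wrote. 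Relatedly, your description of the crossing phase as crossing moves ``plus interleaved seeding moves'' is off: by Observation~\ref{obs}, Phase~Two consists entirely of crossing moves except possibly the first; seeding moves reappear only in Phase~Three.

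A smaller organisational point: the paper decomposes temporally (Phase One $=$ initial settling, Phase Two $=$ crossing, Phase Three $=$ final settling/seeding) rather than by move type. This is cleaner because Phase~Three is then exactly an instance of the undirected routing from a permutation with $ULL=URR=\emptyset$, so the Akers--Murthy distance formula applies verbatim to give $\card{X(\gamma,t)} + c(\gamma,t)$ moves, and the only remaining work is bounding $c(\gamma,t)$ via the alternating-cycle count.
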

\begin{proof}
%The poof of Theorem~\ref{thmMain} is given in Appendix~\ref{theoremMain}.
Even though we have presented Algorithm~\ref{RouteEven} as a distributed
routing protocol between nodes, we will present the analysis as sequence of
swaps on permutations, starting with the permutation labelling the node $s$ and 
ending with the permutation labelling the node $t$. We will denote these permutations 
also by $s$ and $t$ respectively. Moreover, all the swaps will be between the first 
element of a permutation $\pi$  and an element from the left (resp., right) half of 
$\pi$,  if $\sign(\pi)$ is even (resp., odd).  Thereby, we make sure that every swap 
corresponds to a directed edge in $\orient{\SN}$.

We group the sequence of moves into three (possibly empty) phases, {\em Phase
One}, {\em Phase Two}, and {\em Phase Three} in that order.  Phase One is
non-empty only if the first move is a settling move, in which case, Phase One
consists of all the settling moves before any seeding or crossing move. Phase
Two will be non-empty only if there is at least one crossing move, in which
case, it starts after Phase One and continues till Final Crossing move. All the
remaining moves are called Phase Three moves (Fig.~\ref{anaysis}).  We call
the permutation at the end of the Phase One as $\alpha$, at the end of the
Phase Two as $\gamma$ and the one that before $\gamma$ as $\beta$.
\begin{figure*}[htp]
\centering
	\scalebox{.9}{%\documentclass{article}
%%\usepackage[margin=1in]{geometry}
%
%\usepackage{tikz}
%\usetikzlibrary{automata}
%%\usetikzlibrary{positioning}
%\usetikzlibrary{calc,trees,arrows,chains,shapes.geometric,%
%    decorations.pathreplacing,decorations.pathmorphing,shapes,%
%    matrix,shapes.symbols}
%
%\usetikzlibrary{arrows.meta, decorations.pathreplacing, shadows}
%
%\begin{document}

\begin{tikzpicture}
[
	mycircle/.style={
         circle,
         draw=black,
         fill=gray,
         fill opacity = 0.3,
         text opacity=1,
         inner sep=0pt,
         minimum size=20pt,
         font=\small},
	state/.style={circle,inner sep=0pt, minimum size=2pt},
    node distance=1.2cm
]
        
  \node[mycircle,label=below:$s$] (p0) {$\pi_{0}$};
  \node[mycircle, right of=p0 ] (p1) {$\pi_{1}$};

  \node[state, right of=p1,node distance = 1.1cm](p2){};
   \node[state, right of=p2,node distance = 1.1cm](p3){};

  \node[mycircle, label,right of= p3,node distance = .6cm,label=below:$\alpha$](p4) {$\pi_{i-1}$};
  \node[mycircle, label,right of= p4](p5) {$\pi_{i}$};
   \node[state, right of=p5](p9){};
   \node[state, right of=p9](p10){};
  \node[mycircle, label,right of= p5,node distance = 3cm,label=below:$\beta$](p6){$\pi_{j}$};
  
  \node[mycircle, label,right of= p6,node distance = 1.2cm,label=below:$\gamma$](p7){$\pi_{j+1}$};
   \node[state, right of=p7,node distance = 1.1cm](p11){};
   \node[state, right of=p11,node distance = 1.1cm](p12){};
  
  \node[mycircle, label,right of= p12,node distance = .6cm,label=below:$t$](p8){$\pi_k$};
  
  %\node[state, right of=p8,node distance = 1.5cm](p13){};
   %\node[state, right of=p13,node distance = 1.1cm](p14){}; 
   %\node[mycircle, label,right of= p14,node distance = 1cm,label=below:$id$](p15){$\pi_l$};

  %V\node[dot,label=below:0];
 % \path[->] (s) edge[loop above] node {1} ();
  \draw[->] (p1) edge node[above] {$s_2$} +(.8,0);
  \draw[->](p0) edge node[above]{$s_1$}(p1);
  
  \draw[dashed,->](p2) -- ($(p2) !.6cm! (p3)$);
  % \draw[->] (A) -- ($(A)!1cm!(B)$);
  
  \draw[dashed,<-] (p4) -- node[above] {$s_{i-1}$} ++(-1cm,0);
  
  \draw[->](p4)-- node[above]{$s_i$}(p5);
  \draw[->] ($(p4.east)!0.5!(p5.west)$) -- ++(0,-1.3cm) node[below] {Seeding/Crossing};
  \draw[->] (p5) edge node[above] {$s_{i+1}$} +(1,0);
  
   \draw[dashed,->](p9) -- ($(p9) !.6cm! (p10)$);
\draw[dashed,<-] (p6) -- node[above] {$s_{j-1}$} ++(-1cm,0);
 \draw[->](p6)-- node[above]{$s_j$}(p7);
\draw[->] ($(p6.east)!0.5!(p7.west)$) -- ++(0,-1.3cm) node[below] {Final Crossing Move};

  \draw[->] (p7) edge node[above] {$s_{j+1}$} +(1,0);
  \draw[dashed,->](p11) -- ($(p11) !.6cm! (p12)$);
\draw[dashed,<-] (p8) -- node[above] {$s_k$} ++(-1cm,0);

%\draw[->] (p8) edge node[above] {$s_{k+1}$} +(1,0);
%\draw[dashed,<-] (p15) -- node[above] {$s_{\overrightarrow{k}}$} ++(-1cm,0);
%\draw[dashed,->](p13) -- ($(p13) !.6cm! (p14)$);
\draw [
    thick,
    decoration={
        brace,
        mirror,
        raise=0.5cm
    },
    decorate
] (p0.south east) -- (p4.south west) 
node [pos=0.5,anchor=north,yshift=-0.55cm] {Settling Moves}; 
\draw [
    thick,
    decoration={
        brace,
        mirror,
        raise=0.5cm
    },
    decorate
] (p5.south east) -- (p6.south west) 
node [pos=0.5,anchor=north,yshift=-0.55cm] {Crossing Moves}; 
\draw [
    thick,
    decoration={
        brace,
        mirror,
        raise=0.5cm
    },
    decorate
] (p7.south east) -- (p8.south west) 
node [pos=0.5,anchor=north,yshift=-0.55cm] {Seeding/Settling Moves};  

\draw [
    thick,
    decoration={
        brace,
        raise=1cm
    },
    decorate
] (p0.south east) -- (p4.south west) 
node [pos=0.5,anchor=south,yshift=1cm] {Phase One}; 
\draw [
    thick,
    decoration={
        brace,
        raise=1cm
    },
    decorate
] (p4.south east) -- (p7.south west) 
node [pos=0.5,anchor=south,yshift=1cm] {Phase Two}; 
\draw [
    thick,
    decoration={
        brace,
        raise=1cm
    },
    decorate
] (p7.south east) -- (p8.south west) 
node [pos=0.5,anchor=south,yshift=1cm] {Phase Three};

\end{tikzpicture}
%\end{document}}
	\caption{A pictorial representation of the analysis of Algorithm~\ref{RouteEven}}
	
  \label{anaysis}
\end{figure*}
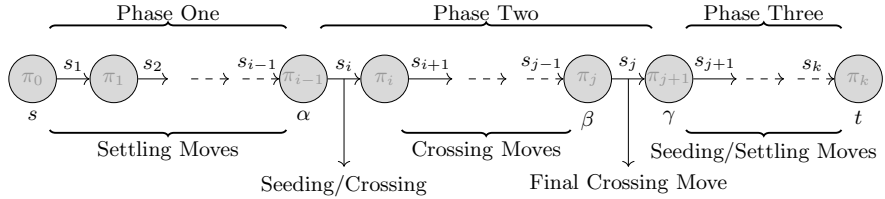

\begin{obs} \label{obs}

All the moves in Phase One are settling moves and all the moves in 
Phase Two except possibly the first are crossing moves.
\end{obs}

\begin{proof}
It follows from the definition of phases that all the moves in Phase One are settling moves. As long as $\card{ULL} + \card{URR} > 0$, any crossing move from an even permutation will pick a left value, and from an odd permutation will pick a right value.  This is ensured by Case 2 of the algorithm. Hence the next move will also be a crossing move.  Once $\card{ULL} + \card{URR} = 0$, we are in Case 3 of the algorithm.  In Case 3.1, the algorithm performs a crossing move by picking an already crossed value. Let us call this move as $s_j$. The next move will be a settling move. Since $\card{ULL} + \card{URR}$ does not increase due to any move of the algorithm (which is ensured by Case 1 of the algorithm), $\card{ULL} + \card{URR}$ remains zero for the rest of the moves. Hence, Case 2 of the algorithm will never occur beyond $s_j$. Also, after $s_j$, since the set $URR$ is empty, a value from the set $R_t$ will never become the first value in an even permutation. Similarly the set $ULL$ is empty after $s_j$, a value from the set $L_t$ will never become the first value in an
odd permutation. Therefore, Case 3 of the algorithm never happens beyond $s_j$. So, Phase Two of the algorithm consists of entirely crossing moves, except possibly the first one and hence Observation~\ref{obs}.  
\end{proof} 

\begin{prop}\label{propPhaseOne} 
The number of crossed value at the end of Phase One, $\card{X(\alpha,t)} = \card{X(s,t)} -(i-1)$ and the number of alternating cycles in $\alpha$ is at most the number of alternating cycles in $s$, \ie $\chi(\alpha,t)\leq \chi(s,t)$.
\end{prop}

\begin{proof}
By Observation~\ref{obs}, in Phase One, every move is a settling move.  When a move is a settling move, its previous move swaps the value in the first position with a crossed value. Hence, during the Phase One, the number of crossed values come down by $(i-1)$.
Every settling move splits an existing cycle into a singleton cycle and another cycle which contains the first value. Since neither of the above, is an alternating cycle $\chi(\alpha,t)\leq \chi(s,t)$. 
\end{proof}

\begin{prop}\label{propPhaseTwo}
The number of alternating cycles at the end of Phase Two,
\begin{equation}\label{eqnAlternatingCyclesGamma}
\chi(\gamma,t)\leq
	\begin{cases}
		 1, \ \text{when } ULL(\alpha,t) = URR(\alpha,t) = ULR(\alpha,t) = \emptyset \\
		 \chi(\alpha,t)+1, \	\text{otherwise.}
	\end{cases}
\end{equation}

Also, the number of moves in Phase Two,
 \begin{equation}\label{eqnUndirectedRdf}
 m_2 \leq
    \begin{cases}
     2, \ \text{when } ULL(\alpha,t) = URR(\alpha,t) = ULR(\alpha,t) = \emptyset\\
	2 \max\{\card{ULL(\alpha,t)}, \card{URR(\alpha,t)}\}+1, \			\text{otherwise.}
\end{cases}
\end{equation}
Furthermore, the number of crossed values in $\gamma$, 
\begin{equation}\label{eqnCrossedValues}
\card{X(\gamma,t)} \leq
\begin{cases}
\card{X(\alpha,t)} +2,  \text{when } ULL(\alpha,t) = URR(\alpha,t) =  ULR(\alpha,t) = \emptyset\\
\card{X(\alpha,t)} + 2 \max\{\card{ULL(\alpha,t)}, \card{URR(\alpha,t)}\},\ \ \text{otherwise.}
\end{cases}
\end{equation}
\end{prop}

\begin{proof}
At first, let us prove the claim on the number of alternating cycles at the end
of Phase two, \ie $\chi(\gamma,t)$. In Phase Two, when $\card{ULL(\alpha,t)} +
\card{URR(\alpha,t)} = 0$ and $\card{ULR(\alpha,t)} = 0$, there will be at
least one crossing move. This move will be carried out with either $t(1)$ or a
settled value (Case 3.2). If it is carried out with $t(1)$, then it will be the
final crossing move. Otherwise, the very next move will be carried out with
$t(1)$ and will represent the final crossing move. In both cases,
$\gamma(1)=t(1)$. Also, there will be exactly one crossed value in both halves
of $\gamma$. Since all the remaining values are settled values, the crossed
values in $\gamma$ will form an alternating cycle. Hence $\chi(\gamma,t)=1$. 

Now, let us consider the case when $ ULL(\alpha,t)$ or $URR(\alpha,t)$ or $
ULR(\alpha,t)$ is non-empty. We will first show that, the number of alternating
cycles before the final crossing move in Phase Two is at most two more than the
number of alternating cycles at the end of Phase One, \ie $\chi(\beta,t)\leq
\chi(\alpha,t)+2$.

In a crossing move of an even permutation, if the exchange happens either with
a settled left value or an unsettled left value that does not belong to the
cycle containing $1$, this will result in a merging of two cycles into a new
cycle. Since this new cycle contains $1$, it will not be an alternating cycle
(Section B). If the exchange takes an unsettled left value from a cycle
containing $1$, this will break the cycle into two.  One of which contains $1$
and hence not an alternating cycle. The second one will not be an alternating
cycle if it further contains any other unsettled left value. This means the
only case in which a new alternating cycle may be created is the crossing move
from the even permutation with $\card{ULL} =1$ and the crossing move from the
odd permutation with $\card{URR}=1$. Hence $\chi(\beta,t)\leq
\chi(\alpha,t)+2$.

If there is an alternating cycle before the Final Crossing Move (\ie in the
permutation $\beta$), the Final Crossing Move exchanges an already crossed
value in this cycle (Case 3.1), resulting in the merging of the two cycles. This
will reduce the number of alternating cycles in $\gamma$ by one. Hence
$\chi(\gamma,t)\leq \chi(\beta,t)-1=\chi(\alpha,t)+1$.

Next we prove the claim on the number of moves in Phase Two. If $ULL(\alpha,t),
 URR(\alpha,t)$ and $ULR(\alpha,t)$ are empty, 
we have one crossing move, if $d(1)\in L_\alpha$ and two otherwise (Case 3.2).

When $ULL(\alpha,t)$ and $URR(\alpha,t)$ are empty but $\card{ULR(\alpha,t) } >0$, we have 
one crossing move (Case 3.1). If either $ULL(\alpha,t)$ or $URR(\alpha,t)$ is non-empty, 
the larger of the two reduces by one in every alternate crossing move of Case 2 of Algorithm~\ref{RouteEven}. 
If at least two crossing moves of Case 2 happen, Case 3.2 will not occur. Hence we can conclude that,
the number of crossing moves is at most $2 \max\{\card{
ULL(\alpha)},\card{URR(\alpha)}\}+1$, when $\card{ULL(\alpha)} + \card{
URR(\alpha)} >0$.

Lastly, we prove the claim on the number of crossed values after Phase Two. 
Since every move of Phase Two adds at most one element to either $ULR$ or
$URL$, $\card{X(\gamma,t)} \leq \card{X(\alpha,t)}+m_2$.  Further, either $ULR(\alpha, t) \neq \emptyset$ or 
if at least two crossing moves of Case 2 happen, Case 3.2 will not occur. Hence the final 
crossing move (Case 3.1) will only replace one crossed value with another.
\end{proof}

\begin{prop}\label{propPhaseThree}
The number of moves in the Phase Three is equal to $\card{X(\gamma,t)}+c(\gamma,t)$.
\end{prop}
\begin{proof}
In \cite{akers1989group} Akers and Murthy proved that the minimum number of steps required to reach a node, $t$, from a node $s$ in $\SN$, denoted by $d$ is
\begin{equation}\label{eqnUndirectedRdfTwo}
	d = \begin{cases}
		m(s,t)+c(s,t), & \text{if}\ s(1)=t(1)\\ m(s,t)+c(s,t)-2, & \text{otherwise.}
	\end{cases} 
\end{equation}
 where $m(s,t)$ is the number of mismatched values
(\ie the values that are not in their correct position) in $s$ and $c(s,t)$ is
the number of cycles in $s$ with respect to $t$. But we know that
\begin{equation}\label{eqnUndirectedRdfUnsettled}
m(s,t) =
\begin{cases}
      \card{ULL(s,t)}+ \card{URR(s,t)} +\card{X(s,t)}, \text{if}\ s(1)=t(1).\\
      \card{ULL(s,t)} + \card{URR(s,t)} +\card{X(s,t)} +2, \text{otherwise.}
    \end{cases}
\end{equation}
Therefore we can write Equation~\ref{eqnUndirectedRdfTwo} as,
\begin{equation}\label{eqnUndirectedDistance}
d = \card{ULL(s,t)} + \card{URR(s,t)} +\card{X(s,t)} + c(s,t).
\end{equation}
  Since $\card{ULL(\gamma,t)}=\card{URR(\gamma,t)} =0$, using 
  Equation~\ref{eqnUndirectedDistance}, we can conclude that the 
  number of moves in the Phase Three is equal to $\card{X(\gamma,t)}+c(\gamma,t)$.
\end{proof}
Let $i$ be the number of steps taken by the algorithm in Phase One. 
Using propositions \ref{propPhaseTwo} and \ref{propPhaseThree} we can bound the
total number of steps, $\orient{d}$, taken by the algorithm as
	\begin{equation}\label{TheoremThreeProof}
		\orient{d} \leq
		\begin{cases}			
			i + 2+(\card{X(\gamma,t)}+c(\gamma,t)), \, \text{when } ULL(\alpha,t) = URR(\alpha,t) = ULR(\alpha,t) = \emptyset\\ 
		i + (2 \max \{\card{ULL(\alpha,t)},\card{URR(\alpha,t)}\}+ 1)
		+ (\card{X(\gamma,t)}+c(\gamma,t)),  \text{ otherwise.}
		\end{cases}
	\end{equation}

	Let us first consider the case when $ULL(\alpha,t),URR(\alpha,t)$ 
	and $ULR(\alpha,t)$ are empty. By Proposition~\ref{propPhaseOne}, we know that,
	
	\begin{equation} \label{eqnCrossedAlpha}
		\card{X(\alpha,t)} = \card{X(s,t)} -(i-1).
	\end{equation}
When $\card{ULR(\alpha,t)} = 0$, there can be at most one crossed left value in the
right half of $\alpha$ and hence $\card{X(\alpha,t)}$ is at most $1$.  From
Proposition~\ref{propPhaseTwo}, we know that the number of moves in Phase Two
of this case is at most $2$, $\card{X(\gamma,t)}\leq \card{X(\alpha,t)}+ 2$ and
$\chi(\gamma,t)=1$.  In fact, as discussed in the proof of Proposition~\ref{propPhaseTwo}, this alternating cycle is the only cycle in
$\gamma$. Therefore,
$c(\gamma,t)=1$.  Using these observations in Equation~\ref{TheoremThreeProof},
we will get
\begin{equation} \label{eqnMainResultOne}
		\orient{d} \leq \card{X(s,t)}+6.
\end{equation}

Next we consider the case when either $ULL(\alpha,t)$ or $URR(\alpha,t)$ 
or $ULR(\alpha,t)$ is non-empty. By Proposition~\ref{propPhaseTwo}, we have
\begin{equation}\label{eqnCrossGamma}
	\card{X(\gamma,t)} \leq \card{X(\alpha,t)} + 2 \max \{\card{ULL(\alpha,t)},\card{URR(\alpha,t)} \}.
\end{equation}
Since $\card{X(\alpha,t)} \leq \card{X(s,t)}-(i-1)$ (Proposition~\ref{propPhaseOne}), 
we can rewrite the Equation~(\ref{eqnCrossGamma}) as
\begin{equation}\label{eqnXGamma}
	\card{X(\gamma,t)} \leq \card{X(s,t)} -(i-1)+ 2\max \{\card{ULL(\alpha,t)}, \card{URR(\alpha,t)} \}.
\end{equation}
Since $ULL(\gamma,t)$ and $URR(\gamma,t)$ are empty, the only non-singleton 
cycles in $\gamma$ are the alternating cycles of $\gamma$ and the one which 
contain $\gamma(1)$. Hence
\begin{equation} \label{eqnGammaCycle}
	c(\gamma,t) \leq \chi(\gamma,t)+1.
\end{equation}
Since $\chi(\gamma,t)\leq \chi(\alpha,t)+1$, (Proposition~\ref{propPhaseTwo}) 
and $\chi(\alpha,t) \leq \chi(s,t)$, (Proposition~\ref{propPhaseOne}) we can 
write Equation~\ref{eqnGammaCycle} as,
\begin{equation}\label{eqnGammaCycleFinal}
	c(\gamma,t) \leq \chi(s,t)+2. 
\end{equation}
Since $ULL(\alpha,t)\subseteq ULL(s,t)$, $URR(\alpha,t) \subseteq URR(s,t)$ 
and by using Equations~(\ref{eqnXGamma}) and (\ref{eqnGammaCycleFinal}), 
we can rewrite Equation~(\ref{TheoremThreeProof}) as
\begin{equation}\label{eqnMainResultTwo}
	\orient{d} \leq   4 \max \{\card{ULL(s,t)},\card{URR(s,t)} \}+ \card{X(s,t)}+\chi(s,t)+4.
\end{equation}
Combining the results in Equation~\ref{eqnMainResultOne} and Equation~\ref{eqnMainResultTwo}, 
we can conclude that the number of steps taken by Algorithm~\ref{RouteEven} 
to send a packet from $s$ to $t$ is at most
\[
\card{X(s,t)}+\max\{6, y\}
\]

steps, where 
\[
	y=4 \max \{\card{ULL(s,t)}, \card{URR(s,t)} \} +
	 \chi(s,t) + 4.
\]
Hence Theorem~\ref{thmMain}.

\end{proof}
\begin{corollary}\label{corOne}
Let $d$ be the distance between any two nodes $s$ and $t$ in the
unoriented star graph $\SN$. Then $\orient{d}$, the distance between $s$ and $t$
in the oriented star graph $\orient{\SN}$ oriented using scheme in
\cite{fujita2013oriented} is upper bounded as,
\begin{equation}\label{eqnCorOne}
	\orient{d}  \leq 4d+4 
\end{equation}
\end{corollary}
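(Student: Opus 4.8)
The plan is to read the bound off Theorem~\ref{thmMain} directly, after re-expressing the quantities appearing there in terms of the undirected distance $d$ via Equation~\ref{eqnUndirectedDistance}. That equation states
\[
d = \card{ULL(s,t)} + \card{URR(s,t)} + \card{X(s,t)} + c(s,t),
\]
where $c(s,t)$ counts the non-singleton cycles of $s$ relative to $t$. The one structural fact I need is that every alternating cycle of $s$ with respect to $t$ is, by definition, a non-singleton cycle of $s$ relative to $t$ (its elements are all unsettled values), so $\chi(s,t) \le c(s,t)$. Apart from this, only the elementary inequality $\max\{\card{ULL(s,t)},\card{URR(s,t)}\} \le \card{ULL(s,t)} + \card{URR(s,t)}$ is used.

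By Theorem~\ref{thmMain}, $\orient{d} \le \card{X(s,t)} + \max\{6, y\}$ with $y = 4\max\{\card{ULL(s,t)},\card{URR(s,t)}\} + \chi(s,t) + 4$, and I would bound the two candidates for the maximum against $4d+4$ separately. For the term $y$, bounding each summand of $\card{X(s,t)} + y$ by four times the corresponding piece of $d$ gives
\[
\card{X(s,t)} + y \le 4\card{X(s,t)} + 4\card{ULL(s,t)} + 4\card{URR(s,t)} + 4c(s,t) + 4 = 4d + 4,
\]
using $\chi(s,t)\le c(s,t)$ in the middle and Equation~\ref{eqnUndirectedDistance} at the end. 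For the constant $6$, assume $s \ne t$, so $d \ge 1$; Equation~\ref{eqnUndirectedDistance} also gives $d \ge \card{X(s,t)}$, whence $4d = 3d + d \ge 3 + \card{X(s,t)}$ and therefore $\card{X(s,t)} + 6 \le 4d + 4$. Combining the two estimates yields $\orient{d} \le \card{X(s,t)} + \max\{6,y\} \le 4d+4$. The degenerate case $s = t$ is handled directly, since then $\orient{d} = 0 \le 4$.

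This is essentially bookkeeping, so I do not expect a genuine obstacle; the two points needing a moment's care are (i) justifying $\chi(s,t)\le c(s,t)$ from the cycle-structure discussion (alternating cycles consist of unsettled values, hence are among the cycles counted by $c(s,t)$), and (ii) isolating the $s\ne t$ case so that the additive constant $6$ of Theorem~\ref{thmMain} can be absorbed into $4d$ rather than producing a counterexample at $s=t$. It is worth remarking that the factor $4$ here — in place of the $2$ one might hope for — is forced precisely by replacing $\max\{\card{ULL(s,t)},\card{URR(s,t)}\}$ with the sum in the estimate of $y$; a better constant would require sharpening the Phase Two analysis itself, not the deduction from Theorem~\ref{thmMain}.
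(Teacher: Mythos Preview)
Your proof is correct and follows essentially the same route as the paper's: both derive the bound from Theorem~\ref{thmMain} by combining Equation~\ref{eqnUndirectedDistance} with $\chi(s,t)\le c(s,t)$ and $\max\{\card{ULL},\card{URR}\}\le \card{ULL}+\card{URR}$, and then handle the additive constant $6$ via $d\ge \card{X(s,t)}$ together with $d\ge 1$. The only cosmetic difference is that the paper splits into the cases $y\le 6$ and $y>6$, whereas you bound both branches of the $\max$ directly; your explicit treatment of $s=t$ is a harmless addition.
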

\begin{proof}
First we consider the case when $y \leq 6$ in Theorem~\ref{thmMain}. 
In this case, $\orient{d} \leq \card{X(s,t)} + 6$. Since $d \geq \card{X(s,t)}$ 
(Equation~\ref{eqnUndirectedDistance}), $\orient{d} \leq d + 6$ which is at most $4d + 4$ for all $d \geq 1$.
Now, let us consider the case when $y > 6$ in Theorem~\ref{thmMain}. 
Hence
\begin{equation}\label{eqnCorOne4}
	\orient{d} \leq \card{X(s,t)} + 4 \max \{\card{ULL(s,t)}, \card{URR(s,t)}\} + \chi(s,t)+4.
\end{equation}
By Equation~\ref{eqnUndirectedDistance}, we know that
\begin{equation}\label{eqnCorOne1d}
		d = \card{ULL(s,t)} + \card{URR(s,t)} +\card{X(s,t)} + c(s,t).
\end{equation}
Since,
\begin{equation}
	\chi(s,t)\leq c(s,t),
\end{equation}
we can rewrite the Equation~\ref{eqnCorOne4} as,
\begin{align}
	\orient{d} &\leq  4 \max \{\card{ULL(s,t)}, \card{URR(s,t)}\} + \card{X(s,t)}+c(s,t)+4 \nonumber \\
	& \leq 4(\card{ULL(s,t)}+\card{URR(s,t)})+\card{X(s,t)}+c(s,t)+4 \nonumber \\
	& \leq 4d+4.
\end{align}
\end{proof}

\begin{corollary}\label{corTwo}
	The diameter of the oriented star graph $\orient{\SN}$,
\begin{equation}\label{eqnDiameter}
	\diam(\orient{\SN}) \leq
	\begin{cases}
		2n+2, \, \text{when } n \text{ is odd},\\
		2n+4, \, \text{otherwise.}
	\end{cases}
\end{equation}
\end{corollary}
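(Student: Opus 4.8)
The plan is to bound the right-hand side of the estimate $\orient{d} \le \card{X(s,t)} + \max\{6,y\}$ of Theorem~\ref{thmMain} uniformly over every ordered pair of nodes $(s,t)$. Write $\ell = \ceil{(n-1)/2}$ for the number of left-half positions and $r = n-1-\ell = \floor{(n-1)/2}$ for the number of right-half positions, so $r \le \ell$, with equality precisely when $n$ is odd. Since $4\ell+4$ equals $2n+2$ when $n$ is odd and $2n+4$ when $n$ is even, it suffices to prove $\orient{d} \le 4\ell+4$ for all pairs. Put $M = \max\{\card{ULL(s,t)},\card{URR(s,t)}\}$ and $\chi = \chi(s,t)$.

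Two counting facts do the work. The first is the trade-off $2M + \card{X(s,t)} \le 2\ell$. Indeed, if $M = \card{ULL(s,t)}$, then $ULL(s,t)$ and $ULR(s,t)$ are disjoint subsets of $L(s)$, while $ULL(s,t)$ and $URL(s,t)$ are disjoint subsets of $L(t)$, and $\card{L(s)} = \card{L(t)} = \ell$; adding the two resulting inequalities gives $2M + \card{X(s,t)} \le 2\ell$. If instead $M = \card{URR(s,t)}$, the same argument in $R(s)$ and $R(t)$ yields $2M + \card{X(s,t)} \le 2r \le 2\ell$. In particular $M \le \ell$, and (taking $M \ge 0$) $\card{X(s,t)} \le 2\ell$. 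The second fact is $\chi \le \tfrac12\card{X(s,t)}$: every alternating cycle of $s$ relative to $t$ is contained in the set $X(s,t)$ of crossed values, because along such a cycle the $s$-position of the next element equals the $t$-position of the current one, so the alternation condition forces the $s$-position and the $t$-position of each element to lie in opposite halves; hence every element of the cycle lies in $ULR(s,t) \cup URL(s,t) = X(s,t)$ (in particular it is neither $s(1)$ nor $t(1)$, which occupy position $1$). Since distinct cycles are disjoint and an alternating cycle has size at least two, $2\chi \le \card{X(s,t)}$.

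The rest is a short computation, split on the $\max$ in Theorem~\ref{thmMain}. If $y \le 6$, then $\orient{d} \le \card{X(s,t)} + 6 \le 2\ell + 6 \le 4\ell + 4$, the last step using $\ell \ge 1$. If $y > 6$, then
\[
\orient{d} \le \card{X(s,t)} + 4M + \chi + 4 \le \tfrac32\card{X(s,t)} + 4M + 4 \le \tfrac32\bigl(2\ell - 2M\bigr) + 4M + 4 = 3\ell + M + 4 \le 4\ell + 4 ,
\]
using in turn $\chi \le \tfrac12\card{X(s,t)}$, then $\card{X(s,t)} \le 2\ell - 2M$, then $M \le \ell$. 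In either case $\orient{d} \le 4\ell+4$, so $\diam(\orient{\SN}) \le 4\ell + 4$, which evaluates to $2n+2$ for odd $n$ and $2n+4$ for even $n$.

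The step I expect to require genuine care, as opposed to routine arithmetic, is $\chi \le \tfrac12\card{X(s,t)}$: one has to check that an alternating cycle cannot meet position $1$ of either permutation and hence lies entirely inside the crossed values, which invokes the cycle-structure discussion of Section~2 together with the precise definitions of $ULR$, $URL$, $X$, and ``alternating cycle'' in Definition~\ref{DataStructures}. It is also worth stressing why the three summands of $y$ cannot simply be bounded separately --- that would only give a bound of order $7\ell$ --- so the argument genuinely relies on coupling $2M + \card{X(s,t)} \le 2\ell$ with $\chi \le \tfrac12\card{X(s,t)}$.
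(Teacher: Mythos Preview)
Your proof is correct, and it is tidier than the paper's own argument. Both proofs start from Theorem~\ref{thmMain} and reduce to showing $\orient{d} \le 4\ell+4$ with $\ell = \ceil{(n-1)/2}$, but the bookkeeping is organised differently.

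The paper assumes (say) $\card{ULL} \ge \card{URR}$, uses only the one-sided bound $\chi \le \card{ULR}$, and then has to compare $\card{URL}$ with $\card{ULR}$; this forces a case split on whether $t(1)$ lies in $L(s)$, in $R(s)$, or in position~$1$, giving $\card{URL} \le \card{ULR}+1$ in the worst case and $\card{ULL}+\card{ULR} \le \card{L(s)}$ or $\card{L(s)}-1$ depending on the case. You instead exploit the symmetry between $ULR$ and $URL$ twice: first via $\chi \le \tfrac12\card{X}$ (which is exactly the sum of the paper's bound $\chi \le \card{ULR}$ and its mirror $\chi \le \card{URL}$), and second via the trade-off $2M + \card{X} \le 2\ell$ obtained by adding the two obvious packing inequalities in $L(s)$ and $L(t)$ (or $R(s)$ and $R(t)$). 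The resulting chain $\card{X}+4M+\chi+4 \le \tfrac32\card{X}+4M+4 \le 3\ell+M+4 \le 4\ell+4$ then goes through without any case analysis on the location of $t(1)$. Your justification that every element of an alternating cycle lies in $X(s,t)$---because the $t$-position of $a_i$ equals the $s$-position of $a_{i+1}$, so alternation in $L(s),R(s)$ forces the $s$- and $t$-positions of each $a_i$ into opposite halves and in particular away from position~$1$---is exactly the right observation and is only implicit in the paper's one-line claim ``$\chi(s,t) \le \card{ULR(s,t)}$''. The net effect is the same bound with a shorter, more symmetric derivation.
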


\begin{proof}
Let $s$ and $t$ be two vertices in $\USN$ such that $\orient{d}(s,t)$ is
maximised. That is, $\diam(G) = \orient{d}(s,t)$. By Theorem~\ref{thmMain}, the number of steps that Algorithm~\ref{RouteEven} takes to send a packet from $s$ to $t$ is at most $\card{X(s,t)}+\max\{6, y\}$, where 
\[ y=4 \max\{\card{ULL(s,t)}, \card{URR(s,t)} \} + \chi(s,t) + 4.  \]
 Since $\card{X(s,t)} \leq n$, we have $\card{X(s,t)} + 5 \leq 2n + 2$ for all $n \geq 3$. So we can assume that $y \geq 6$ and therefore,
\begin{equation}
\label{eqnMainTheorem}
\begin{split}
\diam(G) \leq
	\card{X(s,t)} + 
	4 \max \{\card{ULL(s,t)}, \card{URR(s,t)} \} + \chi(s,t) + 4.
	\end{split}
\end{equation}
Suppose, $\card{ULL(s,t)} \geq \card{URR(s,t)}$.  We know that $X(s,t) =
ULR(s,t) \cup URL(s,t)$, and $\chi(s,t)$ is at most $\card{ULR(s,t)}$. Hence Eqn.~\ref{eqnMainTheorem} becomes
\begin{equation}
\label{eqnMainTheoremULLbig}
	\diam(G) \leq
	4 \card{ULL(s,t)} + 2\card{ULR(s,t)} + \card{URL(s,t)} + 4.
\end{equation}
Since, $\card{URL(s,t)}$ right values of $t$ are not in the right half of 
$s$, at least $(\card{URL(s,t)} - 1)$ of them are in the left half of $s$. 
That is $\card{ULR(s,t)} \geq \card{URL(s.t)} - 1$. We can improve this 
bound if $t(1)$ is not in the left half of $s$.  If $t(1)$ is in the first 
position of $s$, then $\card{URL(s,t)} = \card{ULR(s,t)}$.  If $t(1)$ is in 
the right half of $s$, then $(\card{URL(s,t)} + 1)$ right values of $t$ are 
not in the right half of $s$, and hence at least $\card{URL(s,t)}$ of them 
are in the left half of $s$.
Therefore,
\begin{equation}
\label{eqnMainTheoremULLbig2}
\diam(G) \leq
\begin{cases}
	4 \card{ULL(s,t)} + 3\card{ULR(s,t)} + 5, \text{ if } t(1) \in L(s) \\
	4 \card{ULL(s,t)} + 3\card{ULR(s,t)} + 4, \text{otherwise.}
\end{cases}
\end{equation}
Since $\card{ULL(s,t)} + \card{ULR(s,t)}$ is at most $\card{L(s)} - 1$ in the former case above and $\card{L(s)}$ in the latter case above, we get
\[
\diam(G) \leq 4 \card{L(s)} + 4.
\] 
Since $\card{L(s)}$ is $(n-1)/2$ when $n$ is odd and $n/2$ when 
$n$ is even, we satisfy Equation~\ref{eqnDiameter}. The case when 
$\card{URR(s,t)} > \card{ULL(s,t)}$ is similar and hence omitted.
\end{proof}
Day and Tripathi have numerically computed the diameter of $\USN$ for 
$n$ in the range $3$ to $9$ \cite{day1993unidirectional}. Our bounds on 
the diameter of $\USN$ agrees with their computation when $n \leq 8$. 
For $n=9$ our upper bound is $20$ while their computation reports $24$.

 \bibliographystyle{splncs04}
 \bibliography{star}
\end{document}